\begin{document}

\title{Integrable discretizations of the Dym equation
}

\titlerunning{Integrable discretizations of the Dym equation}        

\author{Bao-Feng Feng, Jun-ichi Inoguchi,\\  
Kenji Kajiwara, Ken-ichi Maruno\\and Yasuhiro Ohta}

\authorrunning{B-F Feng, J. Inoguchi,\\  
K. Kajiwara, K. Maruno\\and Y. Ohta} 

\institute{Bao-Feng Feng and Ken-ichi Maruno\at 
Department of Mathematics, The University of Texas -- Pan American, 
              Edinburg, Texas 78539 \\
              \email{feng@utpa.edu and kmaruno@utpa.edu}           
           \and
Jun-ichi Inoguchi \at 
Department of Mathematical Sciences, Yamagata University, 
1-4-12 Kojirakawa-machi, Yamagata 990-8560, Japan\\
           Kenji Kajiwara \at
Institute of Mathematics for Industry, Kyushu University, 744 Motooka, Fukuoka 819-8581, Japan\\
Yasuhiro Ohta \at 
Department of Mathematics, Kobe University, Rokko, Kobe
  657-8501, Japan}

\date{Received: date / Accepted: date}

\maketitle

\begin{abstract}
Integrable discretizations of the complex and real Dym equations are
proposed. 
$N$-soliton solutions for both semi-discrete 
and fully discrete analogues of the complex and real Dym equations are
 also presented. 
\keywords{Dym equation \and integrable discretization \and $N$-soliton solution }
\end{abstract}

\section{Introduction}
\label{intro}

In this article, we investigate integrable discretizations of the Dym
equation (often called the Harry Dym equation)~\cite{Kruskal,Hereman,Kadanoff,Goldstein-Petrich} 
\begin{eqnarray*}
&&{\rm the \, complex \, Dym \, equation}:\, 
r_t+r^3r_{zzz}=0\,,\quad r, z \in \mathbb{C}, t\in \mathbb{R}\,, 
\, \\
&&{\rm the \, real \, Dym \, equation}:\, r_t+r^3r_{xxx}=0 \,, \quad r,x,t \in \mathbb{R}\,.
\end{eqnarray*}
Note that the form of the complex Dym equation is same as the real Dym
equation except taking complex values $r$ and $z$ for the
complex Dym equation instead of taking real values $r$ and $x$ for the
real Dym equation, but this appears 
in various physical problems~\cite{Kadanoff,Goldstein-Petrich}.  
The Dym equation was found by Harry Dym when he was trying to transfer some
results about isospectral flows to the string equation during Martin
Kruskal's lectures~\cite{Kruskal,Hereman}. 
The Dym equation belongs to 
a class of integrable nonlinear evolution equations found by Wadati,
Konno and Ichikawa~\cite{WKI}. It is well known that the Dym equation is
transformed into the modified KdV (mKdV) equation by the hodograph
(reciprocal) transformation~\cite{Ishimori,Rogers,Kawamoto,Dmitrieva}.   

The mKdV equation
\begin{equation}
u_t \pm \frac{3}{2}u^2u_s + u_{sss}=0,\label{eqn:mKdV}
\end{equation}
where $u$ is a real function with respect to $s$ and $t$. 
Here the case of ``$+$'' sign corresponds to the focusing mKdV equation
and 
the case of ``$-$'' sign corresponds
to the defocusing mKdV equation. 
Introducing a real function $\theta(s,t)$ such that 
$u(s,t)=\frac{\partial}{\partial s}\theta(s,t)$,  
this leads to the potential mKdV equation
\begin{equation}
 \theta_t \pm \frac{1}{2}(\theta_s)^3 + \theta_{sss}=0.\label{eqn:pmKdV}
\end{equation}

For the focusing mKdV equation
\begin{equation}
u_t +\frac{3}{2}u^2u_s + u_{sss}=0,\label{eqn:mKdV-focusing}
\end{equation}
a reciprocal link between the potential focusing mKdV equation and
the complex Dym equation is deeply related to a conservation law
\begin{equation}
(e^{{\rm i}\theta})_t+\left(\frac{1}{2}(\theta_s)^2e^{{\rm i}\theta} 
+{\rm i}\theta_{ss}e^{{\rm i}\theta}
\right)_s=0\,.\label{pmKdV-focusing:conservation}
\end{equation}
Using the conserved density of this conservation law, 
we consider the hodograph (reciprocal) transformation~\cite{Kawamoto,Dmitrieva}
\begin{equation}
z(s,t)=\int_0^s e^{{\rm i}\theta(s',t)} ds'+x_0\,,\quad t'(s,t)=t\,,
\label{hodograph-complex}
\end{equation}
which leads to 
\begin{equation}
\frac{\partial}{\partial s}=
e^{{\rm i}\theta} \frac{\partial}{\partial z}\,,\qquad 
\frac{\partial}{\partial t}=
\frac{\partial}{\partial t'}+
\left(-\frac{1}{2}(\theta_s)^2e^{{\rm i}\theta} 
-{\rm i}\theta_{ss}e^{{\rm i}\theta}
\right)
\frac{\partial}{\partial z}\,.\label{hodograph-dym-focusing}
\end{equation}
Applying (\ref{hodograph-dym-focusing}) to (\ref{pmKdV-focusing:conservation})
and introducing a new dependent complex variable $r=\frac{\partial
z}{\partial s}
=e^{{\rm i}\theta}$, 
we obtain the complex Dym equation
\begin{equation}
r_{t'}+r^3r_{zzz}=0,\label{Dym-complex}
\end{equation}
which leads to 
\begin{equation}
v_{t'}+(v^{-\frac{1}{2}})_{zzz}=0,\label{Dym-complex-v}
\end{equation}
via $r=v^{-\frac{1}{2}}$.

Next, consider the defocusing modified KdV (mKdV) equation 
\begin{equation}
u_t -\frac{3}{2}u^2u_s + u_{sss}=0\,.\label{eqn:mKdV-defocusing}
\end{equation}
A reciprocal link between the potential defocusing mKdV equation and
the real Dym equation is deeply related to a conservation law
\begin{equation}
(e^{\theta})_t+\left(-\frac{1}{2}(\theta_s)^2e^{\theta} 
+\theta_{ss}e^{\theta}
\right)_s=0\,.\label{pmKdV-defocusing:conservation}
\end{equation}
Using the conserved density of this conservation law, 
we consider the hodograph (reciprocal) transformation~\cite{Kawamoto,Dmitrieva}
\begin{equation}
x(s,t)=\int_0^s e^{\theta(s',t)} ds'+x_0\,,\quad t'(s,t)=t\,,
\label{hodograph-real}
\end{equation}
which leads to 
\begin{equation}
\frac{\partial}{\partial s}=
e^{\theta} \frac{\partial}{\partial x}\,,\qquad 
\frac{\partial}{\partial t}=
\frac{\partial}{\partial t'}+
\left(\frac{1}{2}(\theta_s)^2e^{\theta} 
-\theta_{ss}e^{\theta}
\right)
\frac{\partial}{\partial x}\,.\label{hodograph-dym-defocusing}
\end{equation}
Applying (\ref{hodograph-dym-defocusing}) to (\ref{pmKdV-defocusing:conservation})
and introducing a new dependent real variable $r=\frac{\partial x}{\partial s}=e^{\theta}$, 
we obtain the real Dym equation
\begin{equation}
r_{t'}+r^3r_{xxx}=0,\label{Dym-real}
\end{equation}
which leads to 
\begin{equation}
v_{t'}+(v^{-\frac{1}{2}})_{xxx}=0,\label{Dym-real-v}
\end{equation}
via $r=v^{-\frac{1}{2}}$.

The tau-functions and bilinear equations of the complex and real Dym
equations (i.e., the focusing and defocusing mKdV equations) are given as follows: 
\begin{enumerate} 
\item The complex Dym equation is
      transformed into 
the bilinear equations of the focusing mKdV equation
\begin{eqnarray*}
&&(D_s^3+D_t)\tau\cdot \tau^*=0\,,\\
&&D_s^2\tau \cdot \tau^*=0\,, 
\end{eqnarray*}
via the dependent variable transformation
\begin{equation}
r=\left(\frac{\tau}{\tau^*}\right)^2\,, \quad \theta=\frac{2}{{\rm i}}\log \frac{\tau}{\tau^*}\,,
\end{equation}
and the hodograph (reciprocal) transformation
\begin{equation}
z(s,t)=\int_0^s e^{{\rm i}\theta(s',t)} ds'+x_0\,,\quad t'(s,t)=t\,,
\end{equation}
where $\tau^*$ is a complex conjugate of $\tau$. 
In this case, there are two types of explicit soliton solutions which
       are $N$-soliton and $M$-breather solutions. These solutions can
       be expressed by Wronskians.\\
$N$-soliton solution:
\begin{eqnarray}
&&\tau(s,t)={\rm det}\left(f_{j-1}^{(i)}\right)_{1\leq i,j\leq N}\,,
\tau^*_l(t)={\rm det}\left(f_{j}^{(i)}\right)_{1\leq i,j\leq N}\,,
\\
&&f_j^{(i)}=\alpha_ip_i^je^{p_is-p_i^3t}
+\beta_i(-p_i)^je^{-p_is+p_i^3t}\,,
\end{eqnarray}
where $p_i,\alpha_i\in \mathbb{R}$, $\beta_i=\in \sqrt{-1}\mathbb{R}$
      for $i=1,\cdots, N$. \\
$M$-breather solution:
\begin{eqnarray}
&&\tau(s,t)={\rm det}\left(f_{j-1}^{(i)}\right)_{1\leq i,j\leq N}\,,
\tau^*_l(t)={\rm det}\left(f_{j}^{(i)}\right)_{1\leq i,j\leq N}\,,
\\
&&f_j^{(i)}=\alpha_ip_i^je^{p_is-p_i^3t}
+\beta_i(-p_i)^je^{-p_is+p_i^3t}\,,
\end{eqnarray}
where $N=2M$, $p_i, \alpha_i, \beta_i\in \mathbb{C}$ for $i=1,\cdots,2M$, 
$p_{2k}=p_{2k-1}^*$, $\alpha_{2k}=\alpha_{2k-1}^*$, 
$\beta_{2k}=-\beta_{2k-1}^*$ for $k=1,\cdots, M$. 

\item  The real Dym equation is
       transformed into 
the bilinear equations of the defocusing mKdV equation
\begin{eqnarray*}
&&(D_s^3+D_t)\tau\cdot \tilde{\tau}=0\,,\\
&&D_s^2\tau \cdot \tilde{\tau}=0\,, 
\end{eqnarray*}
via the dependent variable transformation
\begin{equation}
r=\left(\frac{\tau}{\tilde{\tau}}\right)^2\,, \quad \theta=2\log \frac{\tau}{\tilde{\tau}}\,.
\end{equation}
and the hodograph (reciprocal) transformation
\begin{equation}
x(s,t)=\int_0^s e^{\theta(s',t)} ds'+x_0\,,\quad t'(s,t)=t\,.
\end{equation}
In this case, there is an $N$-cusped soliton solution which can be
       expressed by Wronskians: 
\begin{eqnarray}
&&\tau(s,t)={\rm det}\left(f_{j-1}^{(i)}\right)_{1\leq i,j\leq N}\,,
\tilde{\tau}_l(t)={\rm det}\left(f_{j}^{(i)}\right)_{1\leq i,j\leq N}\,,
\\
&&f_j^{(i)}=\alpha_ip_i^je^{p_is-p_i^3t}
+\beta_i(-p_i)^je^{-p_is+p_i^3t}\,,
\end{eqnarray}
where $p_i,\alpha_i, \beta_i\in \mathbb{R}$ for $i=1,\cdots, N$. 
\end{enumerate}

\begin{remark}
The above bilinear equations are obtained from the following bilinear
equations which belong to the modified KP hierarchy (the two-dimensional Toda lattice
hierarchy):
\begin{eqnarray}
&&(D_s^3+D_t+3D_sD_{s_2})\tau(k+1)\cdot \tau(k)=0\,,\\
&&(D_s^2-D_{s_2})\tau (k+1) \cdot \tau (k)=0\,, 
\end{eqnarray}
by imposing the conditions
\begin{equation}
\frac{\partial}{\partial s_2}\tau(k)=B\, \tau(k)\,,\quad 
\tau(k+1)=C\, {\tau^*}(k)\,, \quad B, C \in \mathbb{R}\,,  
\end{equation}
(for the real Dym equation, 
the second condition is replaced by $\tau(k+1)=C\, \tilde{\tau}(k)$)
and denoting $\tau=\tau(0)$.  
\end{remark} 

\section{Integrable semi-discrete analogues of the complex and real Dym
 equations}

\begin{lemma}\label{lemma1}
Let 
\begin{eqnarray}
&&\tau_l(t)={\rm det}\left(f_{j-1}^{(i)}\right)_{1\leq i,j\leq N}\,,
\hat{\tau}_l(t)={\rm det}\left(f_{j}^{(i)}\right)_{1\leq i,j\leq N}\,,
\\
&&f_j^{(i)}=\alpha_ip_i^j(1-\epsilon p_i)^{-l}e^{\frac{p_i}{1-\epsilon^2 p_i^2}t}
+\beta_i(-p_i)^j(1+\epsilon p_i)^{-l}e^{-\frac{p_i}{1-\epsilon^2p_i^2}t}\,.
\end{eqnarray}
These tau-functions satisfy the bilinear equations  
\begin{eqnarray}
&&
 D_t~\tau_l\cdot \hat{\tau}_l
=\frac{1}{2\epsilon}\left(\hat{\tau}_{l-1}\tau_{l+1}-\hat{\tau}_{l+1}\tau_{l-1}\right)\,,
\\
&&
 \tau_l\hat{\tau}_l=\frac{1}{2}\left(\hat{\tau}_{l-1}\tau_{l+1}+\hat{\tau}_{l+1}\tau_{l-1}\right)\,.
\end{eqnarray}
\end{lemma}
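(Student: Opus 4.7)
My plan is to establish both bilinear identities via Plücker (Jacobi) determinantal identities, after reducing every shifted and differentiated tau function to an expression in columns at the common level $l$.

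The first step is to verify by direct substitution the three elementary identities for the entries $f_j^{(i)}(l)$:
\begin{eqnarray*}
f_j^{(i)}(l)-f_j^{(i)}(l-1) &=& \epsilon\, f_{j+1}^{(i)}(l),\\
(1-\epsilon^2 p_i^2)\,f_j^{(i)}(l+1)-f_j^{(i)}(l) &=& \epsilon\, f_{j+1}^{(i)}(l),\\
\partial_t f_j^{(i)}(l) &=& (1-\epsilon^2 p_i^2)^{-1}\, f_{j+1}^{(i)}(l).
\end{eqnarray*}
Adding the first two yields the averaging relation $2f_j^{(i)}(l) = f_j^{(i)}(l-1) + (1-\epsilon^2 p_i^2)\,f_j^{(i)}(l+1)$, which is already a scalar shadow of the second bilinear identity. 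One also has the 2-reduction $f_{j+2}^{(i)}(l) = p_i^2\,f_j^{(i)}(l)$, coming from the fact that only the two modes $\pm p_i$ appear in $f_j^{(i)}$.

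Writing $|j\rangle_l$ for the column vector $(f_j^{(i)}(l))_{i=1}^{N}$ and $D := \mathrm{diag}(1-\epsilon^2 p_i^2)$, the recurrences read $|j\rangle_{l-1}=|j\rangle_l-\epsilon\,|j+1\rangle_l$, $D\,|j\rangle_{l+1}=|j\rangle_l+\epsilon\,|j+1\rangle_l$, and $\partial_t|j\rangle_l = D^{-1}|j+1\rangle_l$. Expanding the determinants $\tau_{l-1}$ and $\det(D)\,\tau_{l+1}$ by multilinearity produces
\[
\tau_{l-1}=\sum_{k=0}^{N}(-\epsilon)^{N-k}\Delta_k,\qquad
\det(D)\,\tau_{l+1}=\sum_{k=0}^{N}\epsilon^{N-k}\Delta_k,
\]
where $\Delta_k$ is the $N\times N$ minor of the $N\times(N+1)$ matrix $[\,|0\rangle_l,|1\rangle_l,\ldots,|N\rangle_l\,]$ obtained by deleting column $|k\rangle_l$; in particular $\Delta_0=\hat{\tau}_l$ and $\Delta_N=\tau_l$. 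Analogous expansions for $\hat{\tau}_{l\pm 1}$ use minors of the shifted extended matrix $[\,|1\rangle_l,\ldots,|N+1\rangle_l\,]$, and for the differential identity, differentiating $\tau_l\hat{\tau}_l$ column by column via $\partial_t|j\rangle_l=D^{-1}|j+1\rangle_l$ generates a similar expansion in single-column-replaced minors.

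Substituting these expressions, both bilinear identities become algebraic identities among $N\times N$ minors of the $N\times(N+2)$ matrix $[\,|0\rangle_l,\ldots,|N+1\rangle_l\,]$, which then follow from Plücker (Laplace) relations on this matrix. The 2-reduction $|j+2\rangle_l = P\,|j\rangle_l$ with $P:=\mathrm{diag}(p_i^2)$ plays a crucial role: it expresses $|N\rangle_l$ and $|N+1\rangle_l$ as row-wise $P$-multiples of $|N-2\rangle_l$ and $|N-1\rangle_l$, producing exactly the diagonal factor $\det(I-\epsilon^2 P)=\det(D)$ needed to cancel the one introduced by the forward shift $l\to l+1$ (and by $\partial_t$). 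The main obstacle is book-keeping this $D$-factor consistently between the two sides of each identity and verifying that all contributions organise into the standard Plücker form; once this is done, both bilinear identities fall out as instances of the Plücker relation combined with the 2-reduction.
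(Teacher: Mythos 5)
Your preliminary computations are sound: the three elementary identities for $f_j^{(i)}$ are correct (one slip: the averaging relation $2f_j^{(i)}(l)=f_j^{(i)}(l-1)+(1-\epsilon^2p_i^2)f_j^{(i)}(l+1)$ comes from \emph{subtracting} the first relation from the second, not adding), and so are the telescoping expansions $\tau_{l-1}=\sum_k(-\epsilon)^{N-k}\Delta_k$ and $\det(D)\,\tau_{l+1}=\sum_k\epsilon^{N-k}\Delta_k$, with $\Delta_0=\hat{\tau}_l$, $\Delta_N=\tau_l$. For context: the paper does not prove this lemma at all --- its proof is a citation to Inoguchi--Kajiwara--Matsuura--Ohta --- and the proofs there are indeed of the Casoratian type you are attempting (difference formulas plus a Jacobi/Pl\"ucker determinant identity), so your strategy is right in spirit.

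The genuine gap is at the decisive step. After substituting your expansions, each bilinear equation becomes a \emph{double} sum of products of maximal minors, schematically $\sum_{j,k}(\pm\epsilon)^{2N-j-k}\hat{\Delta}_j\Delta_k$, and such an identity is not ``an instance of the Pl\"ucker relation.'' Worse, it is not an identity of generic minors at all: already for $N=1$, where all Pl\"ucker relations on a $1\times 3$ matrix are trivial, the second bilinear equation reduces after expansion exactly to the condition $|0\rangle_l-\epsilon^2|2\rangle_l=D\,|0\rangle_l$, i.e.\ to the 2-reduction itself; so Pl\"ucker relations on the $N\times(N+2)$ matrix cannot by themselves produce the factor $\det(I-\epsilon^2P)$, and your remark that single-column replacements $|N\rangle_l=P|N-2\rangle_l$ ``produce exactly the factor $\det(D)$'' describes no mechanism --- such replacements produce $\det(P)$ (e.g.\ $\det(|2\rangle_l,\dots,|N+1\rangle_l)=\det(P)\,\tau_l$), not $\det(D)$. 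The missing idea is to use the 2-reduction \emph{before} expanding: pairwise column operations (add $\epsilon$ times column $j+1$ to column $j$, proceeding left to right) combine the two shift contributions as $|j\rangle_l-\epsilon^2|j+2\rangle_l=D|j\rangle_l$ and collapse each shifted tau function into a \emph{single} determinant differing from $\tau_l$ or $\hat{\tau}_l$ in one column, e.g.
\[
\tau_{l-1}=\det\bigl(D|0\rangle_l,\dots,D|N-2\rangle_l,\,|N-1\rangle_l-\epsilon|N\rangle_l\bigr)
=\det(D)\,\det\bigl(|0\rangle_l,\dots,|N-2\rangle_l,\,D^{-1}\bigl(|N-1\rangle_l-\epsilon|N\rangle_l\bigr)\bigr),
\]
and likewise for $\hat{\tau}_{l\pm1}$ and $\partial_t\tau_l$. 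With these difference formulas in hand, each bilinear equation is precisely one Jacobi/Pl\"ucker identity applied to the resulting bordered column family --- which is how the references cited by the paper close the argument; as written, your plan stalls exactly at the bookkeeping you defer, and that bookkeeping is the whole proof.
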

\begin{proof}
See \cite{IKMO2012-Kyushu,IKMO2012-JPA}.  
\end{proof}

\begin{theorem}\label{theorem:semi-discrete-complex-Dym}
An integrable semi-discrete
analogue of the complex Dym equation is given by 
\begin{eqnarray}
&& \frac{dr_l}{dt} =\frac{r_l}{\epsilon}
\left(\frac{r_{l+1}-r_l}{r_{l+1}+r_l}
+\frac{r_{l}-r_{l-1}}{r_{l}+r_{l-1}}\right)\,,\label{semi-discrete-complex-Dym1}\\
&&  \frac{Z_{l+1}-Z_{l}}{\epsilon} = r_l\,,\label{semi-discrete-complex-Dym2}
\end{eqnarray}
where $r_l, Z_l\in \mathbb{C}$, $t,\epsilon \in \mathbb{R}$, $l\in \mathbb{Z}$.  
The semi-discrete complex Dym equation is transformed into the bilinear equations
\begin{eqnarray}
&&
 D_t~\tau_l\cdot\tau^*_l=\frac{1}{2\epsilon}\left(\tau^*_{l-1}\tau_{l+1}-\tau_{l+1}^*\tau_{l-1}\right),
\label{semidiscrete-complex-Dym-bilinear1}
\\
&&
 \tau_l\tau^*_l=\frac{1}{2}\left(\tau^*_{l-1}\tau_{l+1}+\tau^*_{l+1}\tau_{l-1}\right),
\label{semidiscrete-complex-Dym-bilinear2}
\end{eqnarray}
via the dependent variable transformation
\begin{equation}
r_l=e^{{\rm i}\frac{\theta_{l+1}+\theta_l}{2}}=\frac{\tau_{l+1}\tau_l}{\tau_{l+1}^*\tau_l^*}\,,\quad
 \theta_l=\frac{2}{{\rm i}}\log \frac{\tau_l}{\tau_l^*}\,,
\end{equation} 
where $\tau^*_l$ is a complex conjugate of $\tau_l$. 
The $N$-soliton solution is given by 
\begin{eqnarray}
&&\tau(s,t)={\rm det}\left(f_{j-1}^{(i)}\right)_{1\leq i,j\leq N}\,,
\quad \tau^*_l(t)={\rm det}\left(f_{j}^{(i)}\right)_{1\leq i,j\leq N}\,,
\\
&&f_j^{(i)}=\alpha_ip_i^j(1-\epsilon p_i)^{-l}e^{\frac{p_i}{1-\epsilon^2 p_i^2}t}
+\beta_i(-p_i)^j(1+\epsilon p_i)^{-l}e^{-\frac{p_i}{1-\epsilon^2p_i^2}t}\,,
\end{eqnarray}
where $p_i,\alpha_i\in \mathbb{R}$, $\beta_i=\in \sqrt{-1}\mathbb{R}$
      for $i=1,\cdots, N$. \\
The $M$-breather solution is given by 
\begin{eqnarray}
&&\tau(s,t)={\rm det}\left(f_{j-1}^{(i)}\right)_{1\leq i,j\leq N}\,,
\quad \tau^*_l(t)={\rm det}\left(f_{j}^{(i)}\right)_{1\leq i,j\leq N}\,,
\\
&&f_j^{(i)}=\alpha_ip_i^j(1-\epsilon p_i)^{-l}e^{\frac{p_i}{1-\epsilon^2 p_i^2}t}
+\beta_i(-p_i)^j(1+\epsilon p_i)^{-l}e^{-\frac{p_i}{1-\epsilon^2p_i^2}t}\,,
\end{eqnarray}
where $N=2M$, $p_i, \alpha_i, \beta_i\in \mathbb{C}$ for $i=1,\cdots,2M$, 
$p_{2k}=p_{2k-1}^*$, $\alpha_{2k}=\alpha_{2k-1}^*$, 
$\beta_{2k}=-\beta_{2k-1}^*$ for $k=1,\cdots, M$. 
\end{theorem}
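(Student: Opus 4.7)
The plan splits into two parts: (a) deriving the semi-discrete complex Dym equations from the bilinear equations together with the dependent variable transformation, and (b) verifying that the Wronskian tau-functions listed in the theorem satisfy those bilinear equations under the stated reality conditions.

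For part (a), equation (\ref{semi-discrete-complex-Dym2}) is merely the definition of $Z_l$ as a discrete antiderivative of $\epsilon r_l$, so it imposes no constraint. For (\ref{semi-discrete-complex-Dym1}), I would substitute $r_l=\tau_{l+1}\tau_l/(\tau_{l+1}^*\tau_l^*)$ into both sides. Computing $r_{l+1}\pm r_l$ over a common denominator and invoking (\ref{semidiscrete-complex-Dym-bilinear2}) at index $l+1$ to replace $\tau_{l+2}\tau_l^*+\tau_l\tau_{l+2}^*$ by $2\tau_{l+1}\tau_{l+1}^*$ yields
\[
\frac{r_{l+1}-r_l}{r_{l+1}+r_l}=\frac{\tau_{l+2}\tau_l^*-\tau_l\tau_{l+2}^*}{2\tau_{l+1}\tau_{l+1}^*},
\]
with the analogous formula after the shift $l\to l-1$. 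Meanwhile,
\[
\frac{1}{r_l}\frac{dr_l}{dt}=\frac{D_t\tau_{l+1}\cdot\tau_{l+1}^*}{\tau_{l+1}\tau_{l+1}^*}+\frac{D_t\tau_l\cdot\tau_l^*}{\tau_l\tau_l^*},
\]
and (\ref{semidiscrete-complex-Dym-bilinear1}) at indices $l$ and $l+1$ converts the two numerators into $\tfrac{1}{2\epsilon}(\tau_l^*\tau_{l+2}-\tau_{l+2}^*\tau_l)$ and $\tfrac{1}{2\epsilon}(\tau_{l-1}^*\tau_{l+1}-\tau_{l+1}^*\tau_{l-1})$. After multiplying through by $r_l$ the two sides of (\ref{semi-discrete-complex-Dym1}) match identically.

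For part (b), Lemma \ref{lemma1} already establishes bilinear identities with an auxiliary tau-function $\hat{\tau}_l$, so it suffices to verify that the reality conditions force $\hat{\tau}_l=c\,\tau_l^*$ for some nonzero $l$-independent constant $c$; then the $(\tau,\hat{\tau})$ bilinear equations reduce cleanly to the $(\tau,\tau^*)$ versions (\ref{semidiscrete-complex-Dym-bilinear1})--(\ref{semidiscrete-complex-Dym-bilinear2}) by dividing by $c$. In the $N$-soliton case the elementary identity $(f_j^{(i)})^*=p_i^{-1}f_{j+1}^{(i)}$ (using $\beta_i^*=-\beta_i$) holds row-wise, so extracting the scalar $p_i^{-1}$ from row $i$ of the Wronskian gives $\tau_l^*=(\prod_i p_i^{-1})\hat{\tau}_l$. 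In the $M$-breather case, the pairing conditions imply $(f_j^{(2k-1)})^*=p_{2k}^{-1}f_{j+1}^{(2k)}$ and $(f_j^{(2k)})^*=p_{2k-1}^{-1}f_{j+1}^{(2k-1)}$, so complex conjugation of the matrix amounts to swapping rows $2k-1$ and $2k$ for each $k$ and multiplying row $i$ by $p_i^{-1}$; this again expresses $\tau_l^*$ as an $l$-independent constant times $\hat{\tau}_l$.

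The main obstacle lies in the bookkeeping of part (b): one must confirm that the sign from $M$ simultaneous row-swaps and the product of $p_i$-factors is truly independent of $l$, because only then does the constant cancel from both sides of the bilinear equations and the substitution $\hat{\tau}_l\leftrightarrow\tau_l^*$ is legitimate. Part (a), by contrast, is a direct if somewhat lengthy algebraic manipulation once the two key simplifications above are in place, and the full theorem then follows by combining parts (a), (b), and Lemma \ref{lemma1}.
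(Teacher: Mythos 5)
Your proposal is correct and takes essentially the same route as the paper: part (a) is the paper's own manipulation (dividing the $D_t$ bilinear equation (\ref{semidiscrete-complex-Dym-bilinear1}) by the algebraic one (\ref{semidiscrete-complex-Dym-bilinear2}) at indices $l$ and $l+1$ and adding, merely rearranged through the identities $\frac{r_{l+1}-r_l}{r_{l+1}+r_l}=\frac{\tau_{l+2}\tau_l^*-\tau_l\tau_{l+2}^*}{2\tau_{l+1}\tau_{l+1}^*}$ and $\frac{1}{r_l}\frac{dr_l}{dt}=\frac{D_t\tau_{l+1}\cdot\tau_{l+1}^*}{\tau_{l+1}\tau_{l+1}^*}+\frac{D_t\tau_l\cdot\tau_l^*}{\tau_l\tau_l^*}$), and part (b) is exactly the paper's appeal to Lemma \ref{lemma1} plus the complex conjugacy condition. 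Your only genuine addition is that part (b) spells out what the paper leaves as a one-line assertion, namely the row identity $(f_j^{(i)})^*=p_i^{-1}f_{j+1}^{(i)}$ (and its paired breather version with the $(-1)^M$ row-swap sign), verifying that $\hat{\tau}_l$ equals an $l$- and $t$-independent nonzero constant times $\tau_l^*$ so that the constant cancels from both bilinear equations --- a correct and welcome piece of bookkeeping.
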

\begin{proof}
Here we show that the tau-functions of the bilinear equation 
(\ref{semidiscrete-complex-Dym-bilinear1}) and
 (\ref{semidiscrete-complex-Dym-bilinear2}) 
satisfy the semi-discrete Dym equation.  
Dividing (\ref{semidiscrete-complex-Dym-bilinear1}) by 
(\ref{semidiscrete-complex-Dym-bilinear2}), we obtain
\begin{equation}
\frac{d}{dt}\log \tau_{l}
-\frac{d}{dt}\log \tau^*_{l}
=\frac{1}{\epsilon}\frac{\tau_{l+1}\tau^*_{l-1}-\tau^*_{l+1}\tau_{l-1}}
{\tau_{l+1}\tau^*_{l-1}+\tau^*_{l+1}\tau_{l-1}}\,.
\end{equation} 
This can be rewritten as 
\begin{equation}
\frac{d}{dt}\log \frac{\tau_{l}}{\tau^*_{l}}
=\frac{1}{\epsilon}\frac{\frac{\tau_{l+1}\tau_l}{\tau^*_{l+1}\tau^*_l}
-\frac{\tau_{l}\tau_{l-1}}{\tau^*_{l}\tau^*_{l-1}}}
{\frac{\tau_{l+1}\tau_l}{\tau^*_{l+1}\tau^*_l}
+\frac{\tau_{l}\tau_{l-1}}{\tau^*_{l}\tau^*_{l-1}}}\,,
\end{equation}
which leads to 
\begin{equation}
\frac{d}{dt}\log \frac{\tau_{l}}{\tau^*_{l}}
=\frac{1}{\epsilon}
\frac{r_l-r_{l-1}}{r_l+r_{l-1}}
\,.\label{semi-discrete-complex-Dym-part1}
\end{equation}
Applying a shift $l \to l+1$ to (\ref{semi-discrete-complex-Dym-part1}) gives
\begin{equation}
\frac{d}{dt}\log \frac{\tau_{l+1}}{\tau^*_{l+1}}
=\frac{1}{\epsilon}
\frac{r_{l+1}-r_{l}}{r_{l+1}+r_{l}}
\,.\label{semi-discrete-complex-Dym-part2}
\end{equation}
Adding (\ref{semi-discrete-complex-Dym-part1}) 
and (\ref{semi-discrete-complex-Dym-part2}), we obtain
\begin{equation}
 \frac{d}{dt}\log \frac{\tau_{l+1}\tau_l}{\tau^*_{l+1}\tau^*_l}
=\frac{1}{\epsilon}\left(
\frac{r_{l+1}-r_{l}}{r_{l+1}+r_{l}}+
\frac{r_l-r_{l-1}}{r_l+r_{l-1}}
\right)\,,
\end{equation}
which leads to (\ref{semi-discrete-complex-Dym1}).

Equation (\ref{semi-discrete-complex-Dym2}) gives the discrete hodograph 
transformation 
\begin{equation}
Z_l=\sum_{j=0}^{l-1}\epsilon r_j+Z_0\,,
\end{equation}
which leads to the hodograph transformation (\ref{hodograph-complex}) 
in the continuous limit. 

By taking care of a complex conjugacy condition of $\tau$-functions in 
Lemma \ref{lemma1}, 
we obtain the above constraints on parameters for soliton solutions. 
\end{proof}

\begin{remark}
The above semi-discrete complex Dym equation can be written in the
 following self-adaptive moving mesh form~\cite{CH,adaptive}:
\begin{eqnarray}
&& \frac{d}{dt}(Z_{l+1}-Z_{l}) =r_l
\left(\frac{r_{l+1}-r_l}{r_{l+1}+r_l}
+\frac{r_{l}-r_{l-1}}{r_{l}+r_{l-1}}\right)\,,\label{selfadaptive-Dym1}\\
&&  r_l=\frac{Z_{l+1}-Z_{l}}{\epsilon}\,.\label{selfadaptive-Dym2}
\end{eqnarray} 
\end{remark}

We can also obtain the following theorem about an integrable
semi-discretization of the real Dym equation. 
\begin{theorem}
An integrable semi-discretization of the real Dym equation is given by 
\begin{eqnarray}
&& \frac{dr_l}{dt} =\frac{r_l}{\epsilon}
\left(\frac{r_{l+1}-r_l}{r_{l+1}+r_l}
+\frac{r_{l}-r_{l-1}}{r_{l}+r_{l-1}}\right)\,,\label{semi-discrete-real-Dym1}\\
&&  \frac{X_{l+1}-X_{l}}{\epsilon} = r_l\,,\label{semi-discrete-real-Dym2}
\end{eqnarray}
where $r_l, X_l, \epsilon \in \mathbb{R}$, $l\in \mathbb{Z}$. 
The semi-discrete real Dym equation is transformed into 
\begin{eqnarray}
&&
 D_t~\tau_l\cdot \tilde{\tau}_l=\frac{1}{2\epsilon}
\left(\tilde{\tau}_{l-1}\tau_{l+1}-\tilde{\tau}_{l+1}\tau_{l-1}\right),
\label{semidiscrete-real-Dym-bilinear1}
\\
&&
 \tau_l\tilde{\tau}_l=
\frac{1}{2}\left(\tilde{\tau}_{l-1}\tau_{l+1}+\tilde{\tau}_{l+1}\tau_{l-1}\right),
\label{semidiscrete-real-Dym-bilinear2}
\end{eqnarray}
via the dependent variable transformation 
\begin{equation}
r_l=e^{\frac{\theta_{l+1}+\theta_l}{2}}=\frac{\tau_{l+1}\tau_l}{\tilde{\tau}_{l+1}\tilde{\tau}_l}\,,\quad
 \theta_l=2\log \frac{\tau_l}{\tilde{\tau}_l}\,. 
\end{equation} 
The $N$-cusped soliton solution is given by
\begin{eqnarray}
&&\tau_l(t)={\rm det}\left(f_{j-1}^{(i)}\right)_{1\leq i,j\leq N}\,,
\quad \tilde{\tau}_l(t)={\rm det}\left(f_{j}^{(i)}\right)_{1\leq i,j\leq N}\,,
\\
&&f_j^{(i)}=\alpha_ip_i^j(1-\epsilon p_i)^{-l}e^{\frac{p_i}{1-\epsilon^2 p_i^2}t}
+\beta_i(-p_i)^j(1+\epsilon p_i)^{-l}e^{-\frac{p_i}{1-\epsilon^2p_i^2}t}\,,
\end{eqnarray}
where $p_i,\alpha_i, \beta_i\in \mathbb{R}$ for $i=1,\cdots, N$. 
\end{theorem}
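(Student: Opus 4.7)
The plan is to follow the proof of Theorem \ref{theorem:semi-discrete-complex-Dym} almost verbatim, with $\tilde{\tau}_l$ taking the role played there by $\tau_l^{*}$ and all Wronskian parameters restricted to $\mathbb{R}$. The bilinear identities (\ref{semidiscrete-real-Dym-bilinear1}) and (\ref{semidiscrete-real-Dym-bilinear2}) are already supplied by Lemma \ref{lemma1} (upon relabelling $\hat{\tau}_l \to \tilde{\tau}_l$), so what remains is to extract the evolution equation (\ref{semi-discrete-real-Dym1}) from them, interpret (\ref{semi-discrete-real-Dym2}) as a discrete hodograph transformation, and check reality of the resulting $r_l$ and $X_l$.

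For the first step, I would divide (\ref{semidiscrete-real-Dym-bilinear1}) by (\ref{semidiscrete-real-Dym-bilinear2}) to obtain
\begin{equation*}
\frac{d}{dt}\log\frac{\tau_l}{\tilde{\tau}_l}
 = \frac{1}{\epsilon}\,\frac{\tau_{l+1}\tilde{\tau}_{l-1}-\tilde{\tau}_{l+1}\tau_{l-1}}{\tau_{l+1}\tilde{\tau}_{l-1}+\tilde{\tau}_{l+1}\tau_{l-1}},
\end{equation*}
and then use $r_l = \tau_{l+1}\tau_l/(\tilde{\tau}_{l+1}\tilde{\tau}_l)$ to rewrite the right-hand side as $\epsilon^{-1}(r_l-r_{l-1})/(r_l+r_{l-1})$, exactly as in the derivation of (\ref{semi-discrete-complex-Dym-part1}). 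Shifting $l\to l+1$ and adding the two relations collapses the left-hand side to $\frac{d}{dt}\log r_l$, delivering (\ref{semi-discrete-real-Dym1}).

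The second equation (\ref{semi-discrete-real-Dym2}) is read directly as the telescoping relation $X_l = X_0 + \sum_{j=0}^{l-1}\epsilon r_j$, i.e.\ the discrete analogue of (\ref{hodograph-real}), whose formal continuum limit is immediate. For the $N$-cusped soliton, Lemma \ref{lemma1} already yields tau-functions solving the bilinear system, so the only remaining point is to check reality: with $p_i,\alpha_i,\beta_i\in\mathbb{R}$ both $\tau_l$ and $\tilde{\tau}_l$ are real, so $r_l$ and $X_l$ are real, and $\theta_l = 2\log(\tau_l/\tilde{\tau}_l)$ is well defined on the soliton sector where $\tau_l/\tilde{\tau}_l>0$.

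The main obstacle, if one wants to call it that, is essentially bookkeeping: in contrast with the complex case no conjugacy condition must be preserved, so the only substantive step is to verify that the rational manipulation used in the focusing case carries over without modification under the replacement $\tau_l^{*}\mapsto\tilde{\tau}_l$. Since that manipulation nowhere invokes the involution property of $*$, the transfer is routine and the proof reduces to a real-variable twin of the complex one.
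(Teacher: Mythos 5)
Your proposal is correct and follows essentially the same route as the paper's own proof: the paper likewise reduces the derivation of (\ref{semi-discrete-real-Dym1}) to the argument of Theorem \ref{theorem:semi-discrete-complex-Dym} with $\tau_l^{*}$ replaced by $\tilde{\tau}_l$, reads (\ref{semi-discrete-real-Dym2}) as the discrete hodograph transformation $X_l=\sum_{j=0}^{l-1}\epsilon r_j+X_0$, and notes that no parameter constraint is needed since Lemma \ref{lemma1} applies directly with real $p_i,\alpha_i,\beta_i$. Your extra remark that the manipulation never uses the involution property of $*$ is a sound (if implicit in the paper) justification for the transfer.
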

\begin{proof}
The derivation of (\ref{semi-discrete-real-Dym1}) from bilinear
 equations (\ref{semidiscrete-real-Dym-bilinear1}) and 
(\ref{semidiscrete-real-Dym-bilinear2}) is same as the one in Theorem
 \ref{theorem:semi-discrete-complex-Dym}. 

Equation (\ref{semi-discrete-real-Dym2}) gives the discrete hodograph 
transformation 
\begin{equation}
X_l=\sum_{j=0}^{l-1}\epsilon r_j+X_0\,,
\end{equation}
which leads to the hodograph transformation (\ref{hodograph-real}) in 
the continuous limit. 

In the case of the semi-discrete real Dym equation, there is no 
constraint on parameters of soliton solutions in Lemma \ref{lemma1}. 
\end{proof}

The above semi-discrete complex and real Dym equations arise 
from the motion of discrete curves~\cite{FIKMO2011}. 

\section{Integrable fully discrete analogues of the complex and real Dym
 equations}

\begin{lemma}\label{lemma2}
Let 
\begin{eqnarray}
&&\tau_n^m={\rm det}\left(f_{j-1}^{(i)}\right)_{1\leq i,j\leq N}\,,
\quad \hat{\tau}_n^m={\rm det}\left(f_{j}^{(i)}\right)_{1\leq i,j\leq N}\,,
\\
&&f_j^{(i)}=\alpha_ip_i^j\prod_{n'}^{n-1}(1-a_{n'}p_i)^{-1}\prod_{m'}^{m-1}(1-b_{m'}p_i)^{-1}
\nonumber\\
&& \quad \qquad 
+\beta_i(-p_i)^j\prod_{n'}^{n-1}(1+a_{n'}p_i)^{-1}\prod_{m'}^{m-1}(1+b_{m'}p_i)^{-1}\,.
\end{eqnarray}
These tau-functions satisfy the bilinear equations  
\begin{eqnarray}
&& b_m\hat{\tau}_{n}^{m+1}\tau_{n+1}^{m} - a_n\hat{\tau}_{n+1}^{m}\tau_{n}^{m+1}
+ (a_n-b_m)\hat{\tau}_{n+1}^{m+1}\tau_n^{m}=0\,,\\
&& b_m\tau_{n}^{m+1}\hat{\tau}_{n+1}^{m} - a_n\tau_{n+1}^{m}\hat{\tau}_{n}^{m+1}
+ (a_n-b_m)\tau_{n+1}^{m+1}\hat{\tau}_n^{m}=0\,.
\end{eqnarray}
\end{lemma}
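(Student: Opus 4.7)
The plan is to prove both bilinear identities by the classical Casoratian / Pl\"ucker--Jacobi technique. First I would derive clean two-term shift relations on the scalar entries $f_j^{(i)}$, then use them to express every tau-factor appearing in the two identities as an $N\times N$ determinant drawn from one common family of column vectors, and finally reduce each bilinear equation to a determinantal identity.

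Splitting $f_j^{(i)}(n,m)$ into its two exponential pieces (with coefficients $\alpha_ip_i^j$ and $\beta_i(-p_i)^j$) and matching the factors $(1\pm a_np_i)$, $(1\pm b_mp_i)$, a direct calculation yields the clean two-term shift relations
\begin{align*}
f_j^{(i)}(n,m) &= f_j^{(i)}(n+1,m) - a_n f_{j+1}^{(i)}(n+1,m),\\
f_j^{(i)}(n,m) &= f_j^{(i)}(n,m+1) - b_m f_{j+1}^{(i)}(n,m+1),
\end{align*}
and, combining them,
\begin{equation*}
f_j^{(i)}(n,m) = f_j^{(i)}(n+1,m+1) - (a_n+b_m) f_{j+1}^{(i)}(n+1,m+1) + a_n b_m f_{j+2}^{(i)}(n+1,m+1).
\end{equation*}

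Choosing $(n+1,m+1)$ as the base point, set $\phi_j := (f_j^{(i)}(n+1,m+1))_{1\le i\le N}$ and let $\Phi := [\phi_0,\phi_1,\ldots,\phi_{N+2}]$. The above shift relations rewrite every column of every tau-factor as a short linear combination of the $\phi_{j'}$'s with $j'\in\{0,1,\ldots,N+2\}$. Expanding each $N\times N$ determinant by multilinearity turns each tau-factor into an explicit $\mathbb{Z}[a_n,b_m]$-linear combination of $N\times N$ minors of $\Phi$. The bilinear identity then reduces to a polynomial identity among these minors, which should follow from a combination of the three-term Pl\"ucker relations on $\Phi$ together with the structural identity $f_{j+2}^{(i)}=p_i^2 f_j^{(i)}$, equivalently $\phi_{j+2}=D\phi_j$ with $D=\mathrm{diag}(p_1^2,\ldots,p_N^2)$, reflecting the two-parameter ($\pm p_i$) nature of the entries. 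The second bilinear equation is handled by the same argument with the column range of the Casoratian shifted up by one.

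The hard part will be the bookkeeping in this last step: each tau-factor is already a sum of up to $N+1$ minors with signed monomials in $a_n,b_m$, and the $D$-reductions mix minors indexed by column-sets of different ``parities'', so one must carefully track how the Pl\"ucker relations and the structural identity combine to give total cancellation. A cleaner alternative, which I would adopt if the direct expansion becomes unwieldy, is to repackage the whole argument as a single Desnanot--Jacobi (``Lewis Carroll'') identity applied to a suitably bordered $(N+2)\times(N+2)$ determinant whose complementary $k\times k$ minors ($k\in\{N,N+1,N+2\}$) reproduce the six tau-factors in each bilinear equation, with $a_n$ and $b_m$ encoded in the border entries; Jacobi's identity then delivers both bilinear relations in essentially one stroke.
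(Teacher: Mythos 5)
Your preparatory computations are correct: the two-term shift relations and the combined double-shift relation follow exactly as you state, and you have correctly spotted the one genuinely subtle ingredient, namely that Pl\"ucker relations alone cannot suffice and the $2$-reduction $f^{(i)}_{j+2}=p_i^2f^{(i)}_j$ must enter. For comparison: the paper offers no argument of its own --- its ``proof'' is a citation to Inoguchi--Kajiwara--Matsuura--Ohta (Kyushu J.\ Math.\ 66, 2012), where the lemma is obtained from Casorati determinant solutions of the discrete KP/two-dimensional Toda hierarchy by the standard difference-formula-plus-Pl\"ucker machinery. So you are on the canonical route; the issue is whether your outline actually completes it.

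The gap is that your decisive step --- ``the bilinear identity then reduces to a polynomial identity among these minors, which should follow from\dots'' --- is precisely the content of the proof, and the plan as stated obscures a structural asymmetry that it needs. The two bilinear equations do not have the same status. The first is a generic Hirota--Miwa-type identity: it holds for entries $f^{(i)}_j=\alpha_ip_i^j\prod(\cdots)+\beta_iq_i^j\prod(\cdots)$ satisfying your shift relations with $q_i$ \emph{unrelated} to $p_i$, and at the base point $(n+1,m+1)$ it involves only the columns $\phi_0,\dots,\phi_{N+1}$, so it is provable by Pl\"ucker relations on an $N\times(N+2)$ matrix with no reduction at all. The second equation genuinely fails without $q_i=-p_i$: already for $N=1$, substituting your shift relations leaves the residue
\begin{equation*}
a_nb_m(a_n-b_m)\left(\phi_0\phi_3-\phi_1\phi_2\right)
=a_nb_m(a_n-b_m)\,AB\,(p+q)(p-q)^2,
\end{equation*}
where $A,B$ are the dressed amplitudes of the two exponential pieces at the base point; this vanishes only because $q=-p$. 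The clean organization is therefore: prove the first equation by Pl\"ucker alone (and here I would replace your base-point expansion, which turns each tau-factor into a sum of up to $N+1$ minors and each product into $O(N^2)$ cross terms, by the standard difference formulas that represent each of the six tau-factors as a \emph{single} maximal minor of one $N\times(N+2)$ matrix, up to row-gauge constants $\prod_i(1\mp a_np_i)$, so that the equation becomes one three-term Pl\"ucker relation); then obtain the second equation by applying the first with the column window shifted up by one and using
\begin{equation*}
\det\left(f^{(i)}_{j+1}\right)_{1\le i,j\le N}\Big|_{\mathrm{columns}\ f_2,\dots,f_{N+1}}
=\Bigl(\prod_{i=1}^{N}p_i^2\Bigr)\,\tau_n^m,
\end{equation*}
which is an immediate consequence of the structural identity. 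This realizes your remark that the second equation is ``the same argument with the column range shifted by one,'' but it shows the $D$-reduction enters only through this single determinant evaluation, not through any mixing of minors of different parities inside the cancellation. It also shows your fallback --- a single Desnanot--Jacobi identity on one bordered matrix delivering both equations ``in one stroke'' --- cannot work as stated: a Jacobi identity is reduction-blind, valid for arbitrary matrices, whereas one of the two equations is false without the $\pm p_i$ pairing.
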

\begin{proof}
See \cite{IKMO2012-Kyushu}.  
\end{proof}

\begin{theorem}\label{theorem:discrete-complex-Dym}
An integrable fully discrete complex Dym equation is given by 
\begin{eqnarray}
&&r_n^m=Q_{n+1}^mQ_n^m\,,\label{discrete-complex-Dym1}\\
&&\frac{Q_{n+1}^{m+1}-Q_{n}^{m}}
{Q_{n+1}^{m+1}+Q_{n}^{m}}
=\frac{b_m+a_n}{b_m-a_n}~
\frac{Q_{n}^{m+1}-Q_{n+1}^{m}}
{Q_{n}^{m+1}+Q_{n+1}^{m}}\,,\label{discrete-complex-Dym2}\\
&&Z_{n+1}^m-Z_{n}^m=a_nr_n^m\,,\label{discrete-complex-Dym3}
\end{eqnarray}
where $r_n^m, Q_n^m, Z_n^m \in \mathbb{C}$, $a_n, b_m \in \mathbb{R}$, 
$m,n\in \mathbb{Z}$. 
This is transformed into 
\begin{eqnarray}
&& b_m\tau^*{}_{n}^{m+1}\tau_{n+1}^{m} - a_n\tau^*{}_{n+1}^{m}\tau_{n}^{m+1}
+ (a_n-b_m)\tau^*{}_{n+1}^{m+1}\tau_n^{m}=0\,,
\label{discrete-complex-Dym-bilinear1}\\
&& b_m\tau_{n}^{m+1}\tau^*{}_{n+1}^{m} - a_n\tau_{n+1}^{m}\tau^*{}_{n}^{m+1}
+ (a_n-b_m)\tau_{n+1}^{m+1}\tau^*{}_n^{m}=0\,, 
\label{discrete-complex-Dym-bilinear2}
\end{eqnarray}
via
\begin{equation}
r_n^m=e^{{\rm i}\frac{\theta_{n+1}^m+\theta_n^m}{2}}
=\frac{\tau_{n+1}^m\tau_n^m}{{\tau^*}_{n+1}^m{\tau^*}_n^m}\,,
\quad Q_n^m= \frac{\tau_n^m}{{\tau^*}_n^m}
\,,\quad
\theta_n=\frac{2}{{\rm i}}\log \frac{\tau_n^m}{{\tau^*}_n^m} \,,
\end{equation}
where ${\tau^*}_n^m$ is a complex conjugate of $\tau_n^m$. 
The $N$-soliton solution is given by 
\begin{eqnarray}
&&\tau_n^m={\rm det}\left(f_{j-1}^{(i)}\right)_{1\leq i,j\leq N}\,,
\quad {\tau^*}_n^m={\rm det}\left(f_{j}^{(i)}\right)_{1\leq i,j\leq N}\,,
\\
&&f_j^{(i)}=\alpha_ip_i^j\prod_{n'}^{n-1}(1-a_{n'}p_i)^{-1}\prod_{m'}^{m-1}(1-b_{m'}p_i)^{-1}
\nonumber\\
&& \quad \qquad 
+\beta_i(-p_i)^j\prod_{n'}^{n-1}(1+a_{n'}p_i)^{-1}\prod_{m'}^{m-1}(1+b_{m'}p_i)^{-1}\,,
\end{eqnarray}
where $p_i,\alpha_i\in \mathbb{R}$, $\beta_i \in \sqrt{-1}\mathbb{R}$
      for $i=1,\cdots, N$. \\
The $M$-breather solution is given by 
\begin{eqnarray}
&&\tau_n^m={\rm det}\left(f_{j-1}^{(i)}\right)_{1\leq i,j\leq N}\,,
\quad {\tau^*}_n^m={\rm det}\left(f_{j}^{(i)}\right)_{1\leq i,j\leq N}\,,
\\
&&f_j^{(i)}=\alpha_ip_i^j\prod_{n'}^{n-1}(1-a_{n'}p_i)^{-1}\prod_{m'}^{m-1}(1-b_{m'}p_i)^{-1}
\nonumber\\
&& \quad \qquad 
+\beta_i(-p_i)^j\prod_{n'}^{n-1}(1+a_{n'}p_i)^{-1}\prod_{m'}^{m-1}(1+b_{m'}p_i)^{-1}\,,
\end{eqnarray}
where $N=2M$, $p_i, \alpha_i, \beta_i\in \mathbb{C}$ for $i=1,\cdots,2M$, 
$p_{2k}=p_{2k-1}^*$, $\alpha_{2k}=\alpha_{2k-1}^*$, 
$\beta_{2k}=-\beta_{2k-1}^*$ for $k=1,\cdots, M$. 
\end{theorem}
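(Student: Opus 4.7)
The plan is to mirror the proof of Theorem \ref{theorem:semi-discrete-complex-Dym}: start from the bilinear pair (\ref{discrete-complex-Dym-bilinear1})--(\ref{discrete-complex-Dym-bilinear2}) supplied by Lemma \ref{lemma2} with the specialisation $\hat\tau_n^m={\tau^*}_n^m$, translate them into a scalar equation for $Q_n^m=\tau_n^m/{\tau^*}_n^m$, and then read off the soliton and breather solutions from Lemma \ref{lemma2} by imposing the reality conditions on $(p_i,\alpha_i,\beta_i)$ that make the star operation compatible with complex conjugation.

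For the core derivation I would first note that the two bilinear equations share, up to the swap of stars, the three monomial types $\tau^*{}_n^{m+1}\tau_{n+1}^m$, $\tau^*{}_{n+1}^m\tau_n^{m+1}$, and $\tau^*{}_{n+1}^{m+1}\tau_n^m$ and their starred partners. Adding them causes the $b_m$ and $-a_n$ pieces on the first two symmetric products to cancel against each other, leaving the sum identity
\[
\tau^*{}_{n+1}^{m+1}\tau_n^m+\tau_{n+1}^{m+1}\tau^*{}_n^m
=\tau^*{}_n^{m+1}\tau_{n+1}^m+\tau^*{}_{n+1}^m\tau_n^{m+1}.
\]
Subtracting them produces the difference identity
\[
(a_n-b_m)\bigl(\tau_{n+1}^{m+1}\tau^*{}_n^m-\tau^*{}_{n+1}^{m+1}\tau_n^m\bigr)
=(a_n+b_m)\bigl(\tau^*{}_n^{m+1}\tau_{n+1}^m-\tau^*{}_{n+1}^m\tau_n^{m+1}\bigr).
\]
Dividing the difference identity by the sum identity and normalising the two sides by $\tau^*{}_{n+1}^{m+1}\tau^*{}_n^m$ and $\tau^*{}_n^{m+1}\tau^*{}_{n+1}^m$ respectively turns each ratio into $(Q_{n+1}^{m+1}-Q_n^m)/(Q_{n+1}^{m+1}+Q_n^m)$ and $(Q_{n+1}^m-Q_n^{m+1})/(Q_{n+1}^m+Q_n^{m+1})$, and after flipping the sign pair that replaces $(a_n-b_m,Q_{n+1}^m-Q_n^{m+1})$ by $(b_m-a_n,Q_n^{m+1}-Q_{n+1}^m)$ one arrives exactly at (\ref{discrete-complex-Dym2}).

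The rest is straightforward. Equation (\ref{discrete-complex-Dym1}) is the product of the definitions $Q_{n+1}^m=\tau_{n+1}^m/{\tau^*}_{n+1}^m$ and $Q_n^m=\tau_n^m/{\tau^*}_n^m$, while (\ref{discrete-complex-Dym3}) defines $Z_n^m$ telescopically as $Z_n^m=\sum_{n'=0}^{n-1}a_{n'}r_{n'}^m+Z_0^m$ and recovers the continuous hodograph (\ref{hodograph-complex}) in the limit $a_n\to 0$. For the $N$-soliton formula, the reality conditions $p_i,\alpha_i\in\mathbb{R}$, $\beta_i\in\sqrt{-1}\mathbb{R}$ together with $a_n,b_m\in\mathbb{R}$ yield the row identity $f_j^{(i)}=p_i\,\overline{f_{j-1}^{(i)}}$ (using $\bar\beta_i=-\beta_i$ and $p_i(-p_i)^{j-1}=-(-p_i)^j$), hence $\hat\tau_n^m=\bigl(\prod_i p_i\bigr){\tau^*}_n^m$; since the bilinear equations are homogeneous, the overall scalar factor is irrelevant and the identification $\hat\tau={\tau^*}$ is legitimate. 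For the $M$-breather case the paired constraints $p_{2k}=p_{2k-1}^*$, $\alpha_{2k}=\alpha_{2k-1}^*$, $\beta_{2k}=-\beta_{2k-1}^*$ realise conjugation as a pairwise permutation of the rows of the determinant, again giving $\hat\tau\propto{\tau^*}$.

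The main obstacle is not conceptual but combinatorial: one must track the three cyclic permutations of $\tau$-products and their signs carefully so that the sum and difference identities rearrange into (\ref{discrete-complex-Dym2}) with precisely the prefactor $(b_m+a_n)/(b_m-a_n)$ in the stated order rather than its negative. A secondary but routine check is that the reality and pairwise-conjugacy constraints on $(p_i,\alpha_i,\beta_i)$ are preserved by the products $\prod_{n'}^{n-1}(1\mp a_{n'}p_i)^{-1}\prod_{m'}^{m-1}(1\mp b_{m'}p_i)^{-1}$ entering $f_j^{(i)}$, which is automatic from $a_n,b_m\in\mathbb{R}$.
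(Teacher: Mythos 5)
Your proposal is correct and follows essentially the same route as the paper: the paper likewise subtracts and adds the bilinear pair (\ref{discrete-complex-Dym-bilinear1})--(\ref{discrete-complex-Dym-bilinear2}) to get exactly your difference and sum identities, divides one by the other, and normalises by the starred products to obtain (\ref{discrete-complex-Dym2}) in terms of $Q_n^m=\tau_n^m/{\tau^*}_n^m$, with (\ref{discrete-complex-Dym3}) read as the telescoping discrete hodograph transformation. Your explicit verification of the conjugacy conditions via the row identity $f_j^{(i)}=p_i\,\overline{f_{j-1}^{(i)}}$ (and the pairwise row permutation in the breather case), giving $\hat{\tau}_n^m\propto{\tau^*}_n^m$ up to an irrelevant constant by homogeneity, is a correct and more detailed rendering of what the paper compresses into the one-line remark about the complex conjugacy condition in Lemma \ref{lemma2}.
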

\begin{proof}
We show that the tau-functions of the bilinear equation 
(\ref{discrete-complex-Dym-bilinear1}) and
 (\ref{discrete-complex-Dym-bilinear2}) 
satisfy the fully discrete Dym equation.  
Subtracting (\ref{discrete-complex-Dym-bilinear1}) from 
(\ref{discrete-complex-Dym-bilinear2}), we obtain 
\begin{equation}
\tau_{n+1}^{m+1}\tau^*{}_n^{m}-\tau^*{}_{n+1}^{m+1}\tau_n^{m}
=\frac{b_m+a_n}{b_m-a_n}
(\tau_{n}^{m+1}\tau^*{}_{n+1}^{m}-\tau_{n+1}^{m}\tau^*{}_n^{m+1})\,. 
\label{discrete-complex-Dym-bilinear-combine1}
\end{equation}
Adding (\ref{discrete-complex-Dym-bilinear1}) 
and (\ref{discrete-complex-Dym-bilinear2}), we obtain 
\begin{equation}
\tau_{n+1}^{m+1}\tau^*{}_n^{m}+\tau^*{}_{n+1}^{m+1}\tau_n^{m}
=\tau_{n}^{m+1}\tau^*{}_{n+1}^{m}+\tau_{n+1}^{m}\tau^*{}_n^{m+1}\,. 
\label{discrete-complex-Dym-bilinear-combine2}
\end{equation}
Dividing (\ref{discrete-complex-Dym-bilinear-combine1}) by 
(\ref{discrete-complex-Dym-bilinear-combine2}), we obtain
\begin{equation}
\frac{\tau_{n+1}^{m+1}\tau^*{}_n^{m}-\tau^*{}_{n+1}^{m+1}\tau_n^{m}}
{\tau_{n+1}^{m+1}\tau^*{}_n^{m}+\tau^*{}_{n+1}^{m+1}\tau_n^{m}}=
\frac{b_m+a_n}{b_m-a_n}\frac{\tau_{n}^{m+1}\tau^*{}_{n+1}^{m}-\tau_{n+1}^{m}\tau^*{}_n^{m+1}}
{\tau_{n}^{m+1}\tau^*{}_{n+1}^{m}+\tau_{n+1}^{m}\tau^*{}_n^{m+1}}\,,
\end{equation}
which leads to 
\begin{equation}
\frac{
\frac{\tau_{n+1}^{m+1}}{\tau^*{}_{n+1}^{m+1}}-
\frac{\tau_n^{m}}{\tau^*{}_n^{m}}
}
{
\frac{\tau_{n+1}^{m+1}}{\tau^*{}_{n+1}^{m+1}}+
\frac{\tau_n^{m}}{\tau^*{}_n^{m}}
}=
\frac{b_m+a_n}{b_m-a_n}
\frac{
\frac{\tau_{n}^{m+1}}{\tau^*{}_n^{m+1}}
-\frac{\tau_{n+1}^{m}}{\tau^*{}_{n+1}^{m}}
}
{
\frac{\tau_{n}^{m+1}}{\tau^*{}_n^{m+1}}
+\frac{\tau_{n+1}^{m}}{\tau^*{}_{n+1}^{m}}
}\,. 
\end{equation}
This gives (\ref{discrete-complex-Dym2}). 

Equation (\ref{discrete-complex-Dym3}) gives the discrete hodograph 
transformation 
\begin{equation}
Z_n^m=\sum_{j=0}^{n-1}a_j r_j^m+Z_0^m\,,
\end{equation}
which leads to the hodograph transformation (\ref{hodograph-complex}) in
 the continuous limit. 

By taking care of a complex conjugacy condition of $\tau$-functions in Lemma \ref{lemma2}, 
we obtain the above constraints on parameters fot soliton solutions. 
\end{proof}

We can also obtain the following theorem about an integrable fully 
discrete real Dym equation.  
\begin{theorem}\label{theorem:discrete-real-Dym}
An integrable fully discrete real Dym equation is given by 
\begin{eqnarray}
&&r_n^m=Q_{n+1}^mQ_n^m\,,\label{discrete-real-Dym1}\\
&&\frac{Q_{n+1}^{m+1}-Q_{n}^{m}}
{Q_{n+1}^{m+1}+Q_{n}^{m}}
=\frac{b_m+a_n}{b_m-a_n}~
\frac{Q_{n}^{m+1}-Q_{n+1}^{m}}
{Q_{n}^{m+1}+Q_{n+1}^{m}}\,,\label{discrete-real-Dym2}\\
&&X_{n+1}^m-X_{n}^m=a_nr_n^m\,,\label{discrete-real-Dym3}
\end{eqnarray}
where $r_n^m, Q_n^m, X_n^m, a_n, b_m \in \mathbb{R}$, 
$m,n\in \mathbb{Z}$. 
This is transformed into 
\begin{eqnarray}
&& b_m\tilde{\tau}_{n}^{m+1}\tau_{n+1}^{m} - a_n\tilde{\tau}_{n+1}^{m}\tau_{n}^{m+1}
+ (a_n-b_m)\tilde{\tau}_{n+1}^{m+1}\tau_n^{m}=0\,,\label{discrete-real-Dym-bilinear1}\\
&& b_m\tau_{n}^{m+1}\tilde{\tau}_{n+1}^{m} - a_n\tau_{n+1}^{m}\tilde{\tau}_{n}^{m+1}
+ (a_n-b_m)\tau_{n+1}^{m+1}\tilde{\tau}_n^{m}=0\,, \label{discrete-real-Dym-bilinear2}
\end{eqnarray}
via
\begin{equation}
r_n^m=e^{\frac{\theta_{n+1}^m+\theta_n^m}{2}}=\frac{\tau_{n+1}^m\tau_n^m}
{\tilde{\tau}_{n+1}^m\tilde{\tau}_n^m}\,,
\quad Q_n^m= \frac{\tau_n^m}{\tilde{\tau}_n^m}
\,,\quad
 \theta_n=2\log \frac{\tau_n^m}{\tilde{\tau}_n^m}\,. 
\end{equation}
The $N$-cusped soliton solution is given by
\begin{eqnarray}
&&\tau_n^m={\rm det}\left(f_{j-1}^{(i)}\right)_{1\leq i,j\leq N}\,,
\quad \tilde{\tau}_n^m={\rm det}\left(f_{j}^{(i)}\right)_{1\leq i,j\leq N}\,,
\\
&&f_j^{(i)}=\alpha_ip_i^j\prod_{n'}^{n-1}(1-a_{n'}p_i)^{-1}\prod_{m'}^{m-1}(1-b_{m'}p_i)^{-1}
\nonumber\\
&& \quad \qquad 
+\beta_i(-p_i)^j\prod_{n'}^{n-1}(1+a_{n'}p_i)^{-1}\prod_{m'}^{m-1}(1+b_{m'}p_i)^{-1}\,,
\end{eqnarray}
where $p_i,\alpha_i, \beta_i\in \mathbb{R}$ for $i=1,\cdots, N$. 
\end{theorem}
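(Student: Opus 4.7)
The plan is to follow exactly the same strategy as the proof of Theorem \ref{theorem:discrete-complex-Dym}, with $\tau^*$ replaced by $\tilde{\tau}$, noting that the algebraic manipulations are completely formal and do not use the complex-conjugacy relation between the two tau-functions. The only genuinely new content is a remark about how the parameter constraints simplify in the real setting.

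First I would reduce the derivation of (\ref{discrete-real-Dym1})--(\ref{discrete-real-Dym2}) from the bilinear equations (\ref{discrete-real-Dym-bilinear1})--(\ref{discrete-real-Dym-bilinear2}) to a routine manipulation. Subtracting (\ref{discrete-real-Dym-bilinear1}) from (\ref{discrete-real-Dym-bilinear2}) yields
\begin{equation*}
\tau_{n+1}^{m+1}\tilde{\tau}_n^{m}-\tilde{\tau}_{n+1}^{m+1}\tau_n^{m}
=\frac{b_m+a_n}{b_m-a_n}\bigl(\tau_n^{m+1}\tilde{\tau}_{n+1}^{m}-\tau_{n+1}^{m}\tilde{\tau}_n^{m+1}\bigr),
\end{equation*}
while adding them yields
\begin{equation*}
\tau_{n+1}^{m+1}\tilde{\tau}_n^{m}+\tilde{\tau}_{n+1}^{m+1}\tau_n^{m}
=\tau_n^{m+1}\tilde{\tau}_{n+1}^{m}+\tau_{n+1}^{m}\tilde{\tau}_n^{m+1}.
\end{equation*}
Dividing one by the other and introducing $Q_n^m=\tau_n^m/\tilde{\tau}_n^m$ then produces exactly (\ref{discrete-real-Dym2}), while the definition $r_n^m = Q_{n+1}^m Q_n^m$ recovers (\ref{discrete-real-Dym1}). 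This step is purely formal: it never uses the relation between $\tau^*$ and $\tau$ in the complex case, so it transfers verbatim to the $\tilde{\tau}$ setting.

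Next I would treat the discrete hodograph relation (\ref{discrete-real-Dym3}) as in the complex case: telescoping it gives
\begin{equation*}
X_n^m=\sum_{j=0}^{n-1}a_j r_j^m + X_0^m,
\end{equation*}
which, in the continuous limit $a_j\to ds$, reproduces (\ref{hodograph-real}). Finally, for the $N$-cusped soliton solution, I would invoke Lemma \ref{lemma2} directly: the determinantal tau-functions given there satisfy the bilinear system, so setting $\hat{\tau}_n^m=\tilde{\tau}_n^m$ gives the required solution. The key point to emphasize, and the only substantive difference from Theorem \ref{theorem:discrete-complex-Dym}, is that no complex-conjugacy constraint must be imposed on $\tau_n^m$ and $\tilde{\tau}_n^m$ — they are two independent real tau-functions — so the parameters $p_i,\alpha_i,\beta_i$ are simply taken to be real.

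There is no serious obstacle: the proof is essentially a transcription of the proof of Theorem \ref{theorem:discrete-complex-Dym}. The most delicate point, and the one worth stating explicitly, is the verification that the reality of $r_n^m$, $Q_n^m$, $X_n^m$ is automatic under the real-parameter choice in Lemma \ref{lemma2}: since $f_j^{(i)}$ is then manifestly real, both $\tau_n^m$ and $\tilde{\tau}_n^m$ are real, hence so are $Q_n^m=\tau_n^m/\tilde{\tau}_n^m$, $r_n^m=Q_{n+1}^mQ_n^m$, and $X_n^m$ obtained by telescoping (\ref{discrete-real-Dym3}).
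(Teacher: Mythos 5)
Your proposal is correct and follows essentially the same route as the paper: the paper's own proof simply states that the derivation from the bilinear equations (\ref{discrete-real-Dym-bilinear1})--(\ref{discrete-real-Dym-bilinear2}) is identical to that of Theorem \ref{theorem:discrete-complex-Dym} (with $\tau^*$ replaced by $\tilde{\tau}$) and that no parameter constraint is needed in Lemma \ref{lemma2}, which is exactly what you spell out. Your added verification that reality of $Q_n^m$, $r_n^m$, $X_n^m$ follows automatically from real $p_i,\alpha_i,\beta_i$ is a harmless elaboration of the paper's remark that the conjugacy constraint is absent.
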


\begin{proof}
 The derivation of the fully discrete real Dym equation from bilinear
 equations (\ref{discrete-real-Dym-bilinear1}) and 
(\ref{discrete-real-Dym-bilinear2}) is same as the one in Theorem
 \ref{theorem:discrete-complex-Dym}. 
In the case of the fully discrete real Dym equation, 
there is no constraint on parameters of soliton solutions in Lemma \ref{lemma2}. 
\end{proof}

\begin{remark}
Equation (\ref{discrete-real-Dym3}) gives the discrete hodograph 
transformation 
\begin{equation}
X_n^m=\sum_{j=0}^{n-1}a_j r_j^m+X_0^m\,,
\end{equation}
which leads to the hodograph transformation (\ref{hodograph-real}) in
 the continuous limit. 
\end{remark}

The above fully discrete complex and real Dym equations arise 
from the motion of discrete curves~\cite{FIKMO2011}. 

Based on bilinear equations, 
we also obtain another form of fully discrete complex and real Dym equations. 
\begin{theorem}
An integrable fully discrete complex Dym equation is given by 
\begin{eqnarray}
&&\frac{\sqrt{r_{n+1}^{m+1}}-\sqrt{r_{n}^{m}}}
{\sqrt{r_{n+1}^{m+1}}+\sqrt{r_{n}^{m}}}
=\frac{b_m+a_n}{b_m-a_n}~
\frac{\sqrt{r_{n}^{m+1}}-\sqrt{r_{n+1}^{m}}}
{\sqrt{r_{n}^{m+1}}+\sqrt{r_{n+1}^{m}}}\,,\label{discrete-complex-Dym1:2nd}\\
&&Z_{n+1}^m-Z_{n}^m=a_n\sqrt{r_{n+1}^mr_n^m}\,,\label{discrete-complex-Dym2:2nd}
\end{eqnarray}
where $r_n^m, Z_n^m \in \mathbb{C}$, $a_n, b_m \in \mathbb{R}$, 
$m,n\in \mathbb{Z}$. 
This is transformed into 
\begin{eqnarray}
&& b_m\tau^*{}_{n}^{m+1}\tau_{n+1}^{m} - a_n\tau^*{}_{n+1}^{m}\tau_{n}^{m+1}
+ (a_n-b_m)\tau^*{}_{n+1}^{m+1}\tau_n^{m}=0\,,\\
&& b_m\tau_{n}^{m+1}\tau^*{}_{n+1}^{m} - a_n\tau_{n+1}^{m}\tau^*{}_{n}^{m+1}
+ (a_n-b_m)\tau_{n+1}^{m+1}\tau^*{}_n^{m}=0\,, 
\end{eqnarray}
via
\begin{equation}
r_n^m=e^{{\rm i}\theta_n^m}=\left(\frac{\tau_n^m}{{\tau^*}_n^m}\right)^2\,,\quad
 \theta_n=\frac{2}{{\rm i}}\log \frac{\tau_n^m}{{\tau^*}_n^m}\,. 
\end{equation}
The $N$-soliton solution is given by 
\begin{eqnarray}
&&\tau_n^m={\rm det}\left(f_{j-1}^{(i)}\right)_{1\leq i,j\leq N}\,,
\quad {\tau^*}_n^m={\rm det}\left(f_{j}^{(i)}\right)_{1\leq i,j\leq N}\,,
\\
&&f_j^{(i)}=\alpha_ip_i^j\prod_{n'}^{n-1}(1-a_{n'}p_i)^{-1}\prod_{m'}^{m-1}(1-b_{m'}p_i)^{-1}
\nonumber\\
&& \quad \qquad 
+\beta_i(-p_i)^j\prod_{n'}^{n-1}(1+a_{n'}p_i)^{-1}\prod_{m'}^{m-1}(1+b_{m'}p_i)^{-1}\,,
\end{eqnarray}
where $p_i,\alpha_i\in \mathbb{R}$, $\beta_i \in \sqrt{-1}\mathbb{R}$
      for $i=1,\cdots, N$. \\
The $M$-breather solution is given by 
\begin{eqnarray}
&&\tau_n^m={\rm det}\left(f_{j-1}^{(i)}\right)_{1\leq i,j\leq N}\,,
\quad {\tau^*}_n^m={\rm det}\left(f_{j}^{(i)}\right)_{1\leq i,j\leq N}\,,
\\
&&f_j^{(i)}=\alpha_ip_i^j\prod_{n'}^{n-1}(1-a_{n'}p_i)^{-1}\prod_{m'}^{m-1}(1-b_{m'}p_i)^{-1}
\nonumber\\
&& \quad \qquad 
+\beta_i(-p_i)^j\prod_{n'}^{n-1}(1+a_{n'}p_i)^{-1}\prod_{m'}^{m-1}(1+b_{m'}p_i)^{-1}\,,
\end{eqnarray}
where $N=2M$, $p_i, \alpha_i, \beta_i\in \mathbb{C}$ for $i=1,\cdots,2M$, 
$p_{2k}=p_{2k-1}^*$, $\alpha_{2k}=\alpha_{2k-1}^*$, 
$\beta_{2k}=-\beta_{2k-1}^*$ for $k=1,\cdots, M$. 
\end{theorem}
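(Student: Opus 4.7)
My plan is to mirror the derivation of Theorem \ref{theorem:discrete-complex-Dym}, since the new system differs only in the dependent-variable substitution: here $\sqrt{r_n^m}=\tau_n^m/{\tau^*}_n^m$, whereas before the same ratio was called $Q_n^m$. So the first step is to invoke Lemma \ref{lemma2} to assert that the proposed $\tau_n^m$ and $\hat\tau_n^m={\tau^*}_n^m$ (taking $\hat\tau={\tau^*}$, which is consistent with the complex conjugation constraints imposed on $\alpha_i,\beta_i$) satisfy the two bilinear relations (\ref{discrete-complex-Dym-bilinear1})--(\ref{discrete-complex-Dym-bilinear2}).

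Next, I subtract the two bilinear equations to reproduce (\ref{discrete-complex-Dym-bilinear-combine1}), add them to reproduce (\ref{discrete-complex-Dym-bilinear-combine2}), and then divide the former by the latter. Dividing numerator and denominator of both sides by $\tau^*{}_n^m\,\tau^*{}_{n+1}^{m+1}$ (resp.\ $\tau^*{}_n^{m+1}\tau^*{}_{n+1}^m$) collapses each block into a difference/sum of ratios $\tau/\tau^*$. Since $r_n^m=(\tau_n^m/\tau^*{}_n^m)^2$ implies $\sqrt{r_n^m}=\tau_n^m/\tau^*{}_n^m$, this yields exactly (\ref{discrete-complex-Dym1:2nd}). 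For the discrete hodograph relation (\ref{discrete-complex-Dym2:2nd}), I use $\sqrt{r_{n+1}^m r_n^m}=(\tau_{n+1}^m\tau_n^m)/(\tau^*{}_{n+1}^m\tau^*{}_n^m)$ and observe that this is precisely the expression $r_n^m$ of Theorem \ref{theorem:discrete-complex-Dym}, for which the hodograph identity $Z_{n+1}^m-Z_n^m=a_n\,(\tau_{n+1}^m\tau_n^m)/(\tau^*{}_{n+1}^m\tau^*{}_n^m)$ has already been established there. Finally, the $N$-soliton and $M$-breather statements follow from Lemma \ref{lemma2} by imposing the parameter constraints (real $p_i,\alpha_i$ and purely imaginary $\beta_i$ for solitons; pairwise complex-conjugation of $p_i,\alpha_i$ and sign-flipped conjugation of $\beta_i$ for breathers) that ensure the tau-functions of Lemma \ref{lemma2} satisfy $\hat\tau_n^m={\tau^*}_n^m$.

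I do not anticipate a serious obstacle, because the whole statement is an algebraic re-parametrization of Theorem \ref{theorem:discrete-complex-Dym} via the single substitution $Q_n^m=\sqrt{r_n^m}$. The one place that requires some care is the branch choice for the square roots: the identification $\sqrt{r_n^m}=\tau_n^m/\tau^*_n^m=e^{{\rm i}\theta_n^m/2}$ is only well defined modulo sign, so I would explicitly fix the branch so that $\sqrt{r}$ is given by this exponential, which is consistent with $\theta_n=\frac{2}{{\rm i}}\log(\tau_n^m/\tau^*_n^m)$ and makes both (\ref{discrete-complex-Dym1:2nd}) and (\ref{discrete-complex-Dym2:2nd}) unambiguous. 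With this convention in place, the derivation is essentially identical to that of Theorem \ref{theorem:discrete-complex-Dym}, and no further calculation is needed.
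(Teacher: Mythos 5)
Your proposal is correct and follows exactly the route the paper intends: its proof of this theorem consists solely of the remark that it is ``similar to Theorem \ref{theorem:discrete-complex-Dym}'', and your argument is precisely that derivation with the substitution $Q_n^m=\sqrt{r_n^m}=\tau_n^m/\tau^*{}_n^m$, reusing the add/subtract/divide manipulation of the bilinear equations and the established hodograph identity. Your explicit attention to the square-root branch choice is a detail the paper leaves implicit, but it does not change the approach.
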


\begin{proof}
The proof is similar to Theorem \ref{theorem:discrete-complex-Dym}.  
\end{proof}

\begin{remark}
Equation (\ref{discrete-complex-Dym2:2nd}) gives the discrete hodograph 
transformation 
\begin{equation}
Z_n^m=\sum_{j=0}^{n-1}a_j \sqrt{r_{j+1}^mr_j^m}+Z_0^m\,,
\end{equation}
which leads to the hodograph transformation (\ref{hodograph-complex}) in
 the continuous limit. 
\end{remark}

\begin{theorem}
The fully discrete real Dym equation is given by 
\begin{eqnarray}
&&\frac{\sqrt{r_{n+1}^{m+1}}-\sqrt{r_{n}^{m}}}
{\sqrt{r_{n+1}^{m+1}}+\sqrt{r_{n}^{m}}}
=\frac{b_m+a_n}{b_m-a_n}~
\frac{\sqrt{r_{n}^{m+1}}-\sqrt{r_{n+1}^{m}}}
{\sqrt{r_{n}^{m+1}}+\sqrt{r_{n+1}^{m}}}\,,\label{discrete-real-Dym1:2nd}\\
&&X_{n+1}^m-X_{n}^m=a_n\sqrt{r_{n+1}^mr_n^m}\,,\label{discrete-real-Dym2:2nd}
\end{eqnarray}
where $r_n^m, X_n^m, a_n, b_m \in \mathbb{R}$, 
$m,n\in \mathbb{Z}$. 
This is transformed into 
\begin{eqnarray}
&& b_m\tilde{\tau}_{n}^{m+1}\tau_{n+1}^{m} - a_n\tilde{\tau}_{n+1}^{m}\tau_{n}^{m+1}
+ (a_n-b_m)\tilde{\tau}_{n+1}^{m+1}\tau_n^{m}=0\,,\\
&& b_m\tau_{n}^{m+1}\tilde{\tau}_{n+1}^{m} - a_n\tau_{n+1}^{m}\tilde{\tau}_{n}^{m+1}
+ (a_n-b_m)\tau_{n+1}^{m+1}\tilde{\tau}_n^{m}=0\,, 
\end{eqnarray}
via
\begin{equation}
r_n^m=e^{\theta_n^m}=\left(\frac{\tau_n^m}{\tilde{\tau}_n^m}\right)^2\,,\quad
 \theta_n=2\log \frac{\tau_n^m}{\tilde{\tau}_n^m}\,. 
\end{equation}
The $N$-cusped soliton solution is given by
\begin{eqnarray}
&&\tau_n^m={\rm det}\left(f_{j-1}^{(i)}\right)_{1\leq i,j\leq N}\,,
\quad \tilde{\tau}_n^m={\rm det}\left(f_{j}^{(i)}\right)_{1\leq i,j\leq N}\,,
\\
&&f_j^{(i)}=\alpha_ip_i^j\prod_{n'}^{n-1}(1-a_{n'}p_i)^{-1}\prod_{m'}^{m-1}(1-b_{m'}p_i)^{-1}
\nonumber\\
&& \quad \qquad 
+\beta_i(-p_i)^j\prod_{n'}^{n-1}(1+a_{n'}p_i)^{-1}\prod_{m'}^{m-1}(1+b_{m'}p_i)^{-1}\,,
\end{eqnarray}
where $p_i,\alpha_i, \beta_i\in \mathbb{R}$ for $i=1,\cdots, N$. 
\end{theorem}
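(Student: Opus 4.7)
The plan is to follow the template of Theorem \ref{theorem:discrete-complex-Dym} and the preceding second-form complex Dym theorem, substituting the real companion $\tilde{\tau}_n^m$ for ${\tau^*}_n^m$ throughout. The bilinear pair in the statement coincides with Lemma \ref{lemma2} under $\hat{\tau}=\tilde{\tau}$, so the remaining work is (i) passing from those bilinear equations to (\ref{discrete-real-Dym1:2nd})--(\ref{discrete-real-Dym2:2nd}) and (ii) checking the $N$-cusped solution formula.

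For step (i) I would first add and subtract the two bilinear equations, obtaining
\begin{eqnarray*}
\tau_{n+1}^{m+1}\tilde{\tau}_n^m-\tilde{\tau}_{n+1}^{m+1}\tau_n^m &=& \frac{b_m+a_n}{b_m-a_n}\bigl(\tau_n^{m+1}\tilde{\tau}_{n+1}^m-\tau_{n+1}^m\tilde{\tau}_n^{m+1}\bigr),\\
\tau_{n+1}^{m+1}\tilde{\tau}_n^m+\tilde{\tau}_{n+1}^{m+1}\tau_n^m &=& \tau_n^{m+1}\tilde{\tau}_{n+1}^m+\tau_{n+1}^m\tilde{\tau}_n^{m+1},
\end{eqnarray*}
then divide the first relation by the second and factor the products $\tilde{\tau}_{n+1}^{m+1}\tilde{\tau}_n^m$ and $\tilde{\tau}_n^{m+1}\tilde{\tau}_{n+1}^m$ out of the numerator and denominator on each side. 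With $\sqrt{r_n^m}=\tau_n^m/\tilde{\tau}_n^m$ (the positive branch, since $r_n^m=e^{\theta_n^m}>0$), this reproduces (\ref{discrete-real-Dym1:2nd}) at once. For (\ref{discrete-real-Dym2:2nd}) I would use the identity $\sqrt{r_{n+1}^m r_n^m}=(\tau_{n+1}^m\tau_n^m)/(\tilde{\tau}_{n+1}^m\tilde{\tau}_n^m)$, which is precisely the right-hand side of the hodograph (\ref{discrete-real-Dym3}); telescoping then gives $X_n^m=\sum_{j=0}^{n-1}a_j\sqrt{r_{j+1}^m r_j^m}+X_0^m$ and recovers (\ref{hodograph-real}) in the continuum limit $a_n\to 0$.

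For step (ii) I would invoke Lemma \ref{lemma2} with $\hat{\tau}=\tilde{\tau}$; since every parameter $p_i,\alpha_i,\beta_i$ is taken real, the tau-functions are automatically real and the reality constraint needed in the complex case is vacuous. The main subtlety I expect is branch-consistency for $\sqrt{r_n^m}$: at a cusp the ratio $\tau_n^m/\tilde{\tau}_n^m$ can change sign, so (\ref{discrete-real-Dym1:2nd})--(\ref{discrete-real-Dym2:2nd}) must be interpreted with a fixed choice of branch on each smooth segment of the discrete curve. This is the same branch ambiguity already present in the continuous real Dym equation and is handled by the standard convention of keeping $\sqrt{r_n^m}\ge 0$ piecewise.
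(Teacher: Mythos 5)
Your proposal is correct and follows essentially the same route as the paper: the paper's proof is just the remark that the argument is identical to that of Theorem~4 (hence Theorem~3), i.e.\ add and subtract the two bilinear equations, divide, factor out the $\tilde{\tau}$ products to get ratios $\sqrt{r_n^m}=\tau_n^m/\tilde{\tau}_n^m$, telescope the hodograph relation, and note that real $p_i,\alpha_i,\beta_i$ make the conjugacy constraint of the complex case vacuous --- exactly the steps you reconstruct. Your closing remark on branch consistency of $\sqrt{r_n^m}$ is a reasonable extra precaution not discussed in the paper, but it does not change the argument.
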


\begin{proof}
The proof is similar to Theorem \ref{theorem:discrete-real-Dym}.  
\end{proof}

\begin{remark}
Equation (\ref{discrete-real-Dym2:2nd}) gives the discrete hodograph 
transformation 
\begin{equation}
X_n^m=\sum_{j=0}^{n-1}a_j \sqrt{r_{j+1}^mr_j^m}+X_0^m\,,
\end{equation}
which leads to the hodograph transformation (\ref{hodograph-complex}) in
 the continuous limit. 
\end{remark}


\end{document}